\newtheorem{thm}{Theorem}
\newcommand{\ra}{\rightarrow}
\def\To{\Rightarrow}
\def\TTo{\Longrightarrow}
\def\ie{i.e.}
\def\Qn{Q_{\mathcal N}}
\newcommand{\prule}[3]{\ensuremath{(#1,\,#2,\,#3)}}
\newcommand{\ins}[3]{\ensuremath{\prule{#1}{#2}{#3}_{ins}}}
\newcommand{\del}[3]{\ensuremath{\prule{#1}{#2}{#3}_{del}}}
\newcommand{\fins}[1]{\ins{\lambda}{#1}{\lambda}}
\newcommand{\fdel}[1]{\del{\lambda}{#1}{\lambda}}
\newcommand{\lcins}[2]{\ins{#1}{#2}{\lambda}}
\newcommand{\lcdel}[2]{\del{#1}{#2}{\lambda}}
\newcommand{\RCrule}[3]{\ensuremath{\bigl(#1, #2, #3\bigr)}}
\newcommand{\derivesby}[1]{\overset{#1}{\Longrightarrow}}
\newcommand{\dollarfour}{\$^{\scriptscriptstyle(IV)}}
\newcommand{\dollari}[1]{\$^{(#1)}}
\title{Random Context and Semi-Conditional Insertion-Deletion Systems}
\author{Sergiu Ivanov$^1$\and Sergey Verlan$^{1,2}$}
\date{
$^1$ Institute of Mathematics and Computer Science,\\
Academy of Sciences of Moldova,\\
Academiei 5, Chisinau, MD-2028, Moldova\\
email: sivanov@math.md\\
$^2$ Laboratoire d'Algorithmique, Complexit\'e et Logique,\\
Universit\'e Paris Est -- Cr\'eteil Val de Marne,\\
61, av. g\'en. de Gaulle, 94010 Cr\'eteil, France\\
email: verlan@univ-paris12.fr
}
\begin{document}

\maketitle

\begin{abstract}

In this article we introduce the operations of insertion and deletion working
in a random-context and semi-conditional manner. We show that the conditional
use of rules strictly increase the computational power. In the case of
semi-conditional insertion-deletion systems context-free insertion and deletion
rules of one symbol are sufficient to get the computational completeness. In
the random context case our results expose an asymmetry between the
computational power of insertion and deletion rules: systems of size $(2,0,0;
1,1,0)$ are computationally complete, while systems of size $(1,1,0;2,0,0)$
(and more generally of size $(1,1,0;p,1,1)$) are not. This is particularly
interesting because other control mechanisms like graph-control or matrix
control used together with insertion-deletion systems do not present such
asymmetry.
\end{abstract}

\section{Introduction}
Insertion-deletion systems are a powerful theoretical computational device
which is based on two elementary operations of insertion and deletion of
substrings in a string. These operations were first considered with a
linguistic motivation in~\cite{Marcus} and latter developed
in~\cite{Galiuk,Kluwer}. These references investigate Marcus contextual
grammars which capture many interesting linguistic properties like ambiguity
and duplication.

Another motivation for these operations can be found
in~\cite{Haussler82,Haussler83} where the insertion operation and its iterated
variant is introduced  as a generalization of Kleene's operations of
concatenation and closure~\cite{Kleene56}. The operation of concatenation would
produce a string $xyz$ from two strings $xy$ and $z$. By allowing the
concatenation to happen anywhere in the string and not only at its right
extremity a string $xzy$ can be produced, i.e., $z$ is inserted into $xy$.
In~\cite{Kari} the deletion is defined as a right quotient operation which
happens not necessarily at the rightmost end of the string.

The third inspiration for insertion and deletion operations comes,
surprisingly, from the field of molecular biology. In fact they correspond to a
mismatched annealing of DNA sequences, see~\cite{dna} for more details.  Such
operations are also present in the evolution processes under the form of point
mutations as well as in RNA editing, see the discussions
in~\cite{Beene,BBD07,Smith,dna}. This biological motivation of
insertion-deletion operations led to their study in the framework of molecular
computing, see, for example, \cite{Daley,cross,dna,TY}.

In general, an insertion operation means adding a substring to a given string
in a specified (left and right) context, while a deletion operation means
removing a substring of a given string from a specified (left and right)
context. A finite set of insertion-deletion rules, together with a set of
axioms provide a language generating device: starting from the set of initial
strings and iterating insertion-deletion operations as defined by the given
rules, one obtains a language.

Even in their basic variants, insertion-deletion systems are able to
characterize the recursively enumerable languages, see~\cite{VerlanH} for an
overview of known results. Moreover, as it was shown in \cite{cfinsdel}, the
context dependency may be replaced by insertion and deletion of strings of
sufficient length, in a context-free manner. If the length is not sufficient
(less or equal to two) then such systems are not able to generate more than the
context-free languages and a characterization of them was shown
in~\cite{SV2-2}.

Similar investigations were continued in \cite{MRV07,KRV08,KRV08c} on
insertion-deletion systems with one-sided contexts, i.e., where the context
dependency is asymmetric and is present only from the left or only from the
right side of all insertion and deletion rules. The papers cited above give
several computational completeness results depending on the size of insertion
and deletion rules. We recall the interesting fact that some combinations are
not leading to computational completeness, i.e., there are
languages that cannot be generated by such devices.

Similarly as in the case of context-free rewriting, it is possible to consider
a graph-controlled variant of insertion-deletion systems. Thus the rules cannot
be applied at any time, as their applicability depends on the current
``state'', changed by a rule application. Such a formalization is rather
similar to the definition of insertion-deletion P systems~\cite{membr}.
One-sided graph-controlled insertion-deletion systems where at most two symbols
may be present in the description of insertion and deletion rules were
investigated in~\cite{FKRV10}. This correspond to systems of size
$(1,1,0;1,1,0)$, $(1,1,0;1,0,1)$, $(1,1,0;2,0,0)$, and $(2,0,0;1,1,0)$, where
the first three numbers represent the maximal size of the inserted string and
the maximal size of the left and right contexts, resp., while the last three
numbers represent the same information for deletion rules. It is known that
such systems are not computationally complete~\cite{KRV10}, while the
corresponding P systems and graph-controlled variants are. A summary of these
results can be found in\,\cite{AKRV10b,VerlanH}.

Further adaptation of control mechanisms from the area of regulated rewriting
leads us to matrix insertion-deletion systems~\cite{PV10}, where insertion and
deletion rules are grouped in sequences, called matrices, and either the whole
sequence is applied consecutively, or no rule is applied. It was shown that in
the case of such control the computational power of systems of sizes above is
strictly increasing. Moreover, binary matrices suffice to achieve this result
giving a  characterization similar to the binary normal form for matrix
grammars.

In this paper we continue the investigation on the ``regulated''
insertion-deletion. We adapt the idea of semi-contextual and random context
grammars to these systems. More precisely, a permitting and forbidding
condition is associated to each insertion and deletion rule. In the case of
semi-contextual control the conditions are based on sets of words, while in the
random context case the conditions are sets of letters. A rule can be applied
if all the words from the permitting condition are present in the string, while
no word from the forbidding condition is present. We show that the
computational completeness can be achieved by semi-conditional
insertion-deletion systems whose permitting and forbidding conditions contain
words of length at most~2 and using context-free insertion and deletion of one
symbol, \ie{} rules of size $(1,0,0;1,0,0)$. This is quite interesting result
showing that the semi-conditional control is very powerful -- even in the case
of graph-controlled insertion-deletion systems with appearance check only
recursively enumerable sets of numbers can be obtained with same insertion and
deletion parameters.

Until now all the results in the area of insertion-deletion systems (with or
without control mechanisms) exhibited the property that the computational
completeness is achieved or not at the same time by both systems of size
$(n,m,m';p,q,q')$ and $(p,q,q';n,m,m')$, \ie{} when the parameters of insertion
and deletion are interchanged. This empirical property suggested that with
respect to the computational completeness the insertion and deletion rules are
similar (even if the proofs use quite different ideas). This article shows for
the first time that this property does not always hold, the counterexample
being random context insertion-deletion systems of size $(2,0,0;1,1,0)$ which
generate all recursively enumerable languages, while systems of size
$(1,1,0;2,0,0)$ (and more generally of size $(1,1,0;p,1,1)$) cannot do this.

\section{Preliminaries}

We do not present the usual definitions concerning standard concepts of the
theory of formal languages and we only refer to \cite{handbook} for more
details.
The empty string is denoted by $\lambda $. The family of recursively
enumerable, context-sensitive and context-free languages is denoted by $RE$,
$CS$ and $CF$, respectively. We will denote the length of a string $w$ with
$|w|$; we will refer to the number of occurrences of the symbol $a$ in the
string $|w|$ by the notation $|w|_a$.

In the following, we will use special variants of the \emph{Geffert} normal
form for type-0 grammars (see~\cite{Geffert91} for more details).

A grammar $G=\left( N,T,S,P\right) $ is said to be in \emph{Geffert normal
form}~\cite{Geffert91} if the set of non-terminal symbols $N$ is defined as
$N=\{S,A,B,C,D\}$, $T$ is an alphabet and $P$ only contains context-free rules
of the forms $S\to uSv$ with $u\in \{A,C\}^+$ and $v\in (T\cup \{B,D\})^+$ as
well as $S\to \lambda$
and two (non-context-free) erasing rules $AB\to \lambda $ and $CD\to \lambda $.

We remark that in the Geffert normal form we can easily transform the linear
rules (suppose that $A$, $B$, $C$ and $D$ are treated like terminals) into a
set of left-linear and right-linear rules (albeit by increasing the number of
non-terminal symbols, e.g., see \cite{membr}). More precisely, we say that a
grammar $G=\left( N,T,S,P\right) $ with $N=N'\cup N''$, $S,S'\in N'$, and
$N''=\{A,B,C,D\}$, is in the \emph{special Geffert normal form} if, besides the
two erasing rules $AB\to \lambda $ and $CD\to \lambda $, it only has
context-free rules of the following forms:
\begin{align*}
& X\to bY,\quad \text{where }X,Y\in N',b\in N'', X\ne Y\\
& X\to Yb,\quad \text{where }X,Y\in N',b\in T\cup N'', X\ne Y\\
& S'\to \lambda .
\end{align*}

Moreover, except rules of the forms  $X\to Sb$ and $X\to S'b$, we may even
assume that for the first two types of rules it holds that the right-hand side
is unique, i.e., for any two rules $X\to w$ and $U\to w$ in $P$ we have $U=X$.

The computation in a grammar in the special Geffert normal form is done in two
stages. During the first stage, only context-free rules are applied. During the
second stage, only the erasing rules $AB\to \lambda $ and $CD\to \lambda $ are
applied. These two erasing rules are not applicable during the first stage as
long as the left and the right part of the current string are still separated
by $S$ (or $S'$) as all the symbols $A$ and $C$ are generated on the left side
of these middle symbols and the corresponding symbols $B$ and $D$ are generated
on the right side. The transition between stages is done by the rule $S'\to
\lambda $ (which corresponds to the rule $S\to\lambda$ from the Geffert normal
form).
We remark that all these features of a grammar in the special Geffert normal
form are immediate consequences of the proofs given in~\cite{Geffert91}.

Since during the first stage symbols $A$, $B$, $C$ and $D$ act like
non-terminals, if no confusion arises, we consider that during this stage there
is only one non-terminal and there are only left- and right-linear rules.

We give below the definitions of semi-conditional and random-context grammars,
as given in~\cite{handbook}.

A {\it semi-conditional} grammar is a quadruple $G=(N,T,S,P)$, where $N,T,S$
are specified as in a context-free grammar  and $P$ is a finite set of triples
of the form $p=( A\ra w;E,F),$ where $A\ra w$ is a context-free production over
$N\cup T$ and $E,F$ are finite subsets of $ (N\cup T)^+$. Then, $p$ can be
applied to a string $x\in (N\cup T)^*$ only if $A$ appears in $x$, each element
of $E$ and no element of $F$ is a subword of $x$. If $E$ or $F$ is the empty
set, then no condition is imposed by $E,$ or respectively, $F$. $E$ is said to
be the set of {\it permitting} and $F$ is said to be the set of {\it
forbidding} context conditions of $p.$

We remark that in some sources, e.g.~\cite{paun97}, the definition of
semi-conditional grammars implies that $|E|\le 1$, $|F|\le 1$. We consider the
definition from~\cite{handbook} where the above restriction is not present. We
denote by $SC_{i,j}$ the family of semi-conditional grammars where $|w_1|\le
i$, $|w_2|\le j$, for any $w_1\in E$ and $w_2\in F$ for any rule $(A\to w;E,F)$
of the grammar.

 A \emph{random context} grammar is a semi-conditional
grammar where $E,F\subseteq N$.

Semi-conditional grammars generate all recursively enumerable or all
context-sensitive languages, depending on whether $\lambda$-rules are used or
not, respectively.

\subsection{Insertion-deletion systems}

An \textit{insertion-deletion system} is a construct $ID=(V,T,A,I,D)$, where
$V$ is an alphabet; $T\subseteq V$ is the \textit{terminal} alphabet (the
symbols from $V\setminus T$ are called \textit{non-terminal} symbols);
$A\subseteq V^*$ is the set of \textit{axioms}; $%
I,D$ are finite sets of triples of the form $(u,\alpha ,v)$, where $u$, $%
\alpha $ ($\alpha \neq \lambda $), and $v$ are strings over $V$. The triples in
$I$ are \textit{insertion rules}, and those in $D$ are \textit{deletion
rules}. An insertion rule $(u,\alpha ,v)\in I$ indicates that the string $%
\alpha $ can be inserted between $u$ and $v$, while a deletion rule $%
(u,\alpha ,v)\in D$ indicates that $\alpha $ can be removed from between the
context $u$ and $v$. Stated in another way, $(u,\alpha ,v)\in I$ corresponds to
the rewriting rule $uv\to u\alpha v$, and $(u,\alpha ,v)\in D$ corresponds to
the rewriting rule $u\alpha v\to uv$. By $\To _{ins}$ we denote the relation
defined by the insertion
rules (formally, $x\To_{ins}y$ if and only if $%
x=x_{1}uvx_{2},y=x_{1}u\alpha vx_{2}$, for some $(u,\alpha ,v)\in I$ and
$x_{1},x_{2}\in V^*$), and by $\To_{del}$ the relation defined by the deletion
rules (formally, $x\To_{del}y$ if and only if $x=x_{1}u\alpha
vx_{2},y=x_{1}uvx_{2}$, for some $(u,\alpha ,v)\in D$ and $x_{1},x_{2}\in
V^*$). By $\To $ we refer to any of the relations $\To_{ins},\To_{del}$, and by
$\To^*$ we denote the reflexive and transitive closure of $\To$.

We will often consider $R = I\cup D$ and distinguish between insertion and
deletion rules by the subscripts $_{ins}$ or $_{del}$.

The language generated by $ID=(V,T,A,I,D)$ is defined by
\begin{equation*}
L(ID)=\{w\in T^*\mid x\To ^*w\mathrm{\ for\ some\ }
x\in A\}.
\end{equation*}

The complexity of an insertion-deletion system $ID=(V,T,A,I,D)$ is described by
the  vector\\ $(n,m,m';p,q,q')$ called \emph{size}, where
\begin{eqnarray*}
n=\max\{|\alpha|\mid (u,\alpha,v)\in I\}, & & p=\max\{|\alpha|\mid
(u,\alpha,v)\in D\}, \\
m=\max\{|u|\mid (u,\alpha,v)\in I\}, & & q=\max\{|u|\mid (u,\alpha,v)\in D\},
\\
m'=\max\{|v|\mid (u,\alpha,v)\in I\}, & & q'=\max\{|v|\mid
(u,\alpha,v)\in D\}.
\end{eqnarray*}

The \emph{total size} of an insertion-deletion system $ID$ of size
$(n,m,m';p,q,q')$ is defined as the sum of all the numbers from the vector:
$\Sigma(ID)=n+m+m'+p+q+q'$.

By $INS_{n}^{m,m'}DEL_{p}^{q,q'}$ we denote the families of languages generated
by insertion-deletion systems having the size $(n,m,m';p,q,q')$.

If one of the parameters $n,m,m',p,q,q'$ is not specified, then instead we
write the symbol~$\ast $. In particular, $INS_*^{0,0}DEL_*^{0,0}$ denotes the
family of languages generated by \emph{context-free insertion-deletion
systems}. If one of the numbers from the pairs $m $, $m'$ and/or $q$, $q'$ is
equal to zero (while the other one is not), then we say that the corresponding
families have a one-sided context. Finally we remark that the rules from $I$
and $D$ can be put together into one set of rules $R$ by writing $\left(
u,\alpha ,v\right) _{ins}$ for $\left( u,\alpha ,v\right) \in I$ and $\left(
u,\alpha ,v\right) _{del}$ for $\left( u,\alpha ,v\right) \in D$.

\subsection{Conditional insertion-deletion systems}

In a similar way to context-free grammars, insertion-deletion systems can be
extended by adding some additional controls.
We discuss here the adaptation of the idea of semi-conditional and random
context grammars for insertion-deletion systems  and define corresponding types
of insertion-deletion systems.

A \emph{semi-conditional insertion-deletion system} of degree $(i,j), i,j\ge1$ is a construct%
\begin{equation*}
\gamma =(V,T,A,R),\mathrm{\ where}
\end{equation*}
\begin{itemize}
\item $V$ is a finite alphabet,

\item $T\subseteq V$ is the \emph{terminal alphabet},

\item $A\subseteq V^*$ is a finite set of \emph{axioms},

\item $R=\{r_1,\dots, r_n\}$, $n\ge 1$ is a finite set of rules of the form
    $r_k:\RCrule{\prule{a}{b}{c}_t}{P}{Q}$, where $\prule{a}{b}{c}$, is an
    insertion rule if $t=ins$ or a deletion rule if $t=del$, $P\subseteq
    V^i$, $Q\subseteq V^j$, $1\le k\le n$.
\end{itemize}

The sentential form (also called configuration) of $\gamma$ is a string $w\in
V^*$. For $r_k:\RCrule{\prule{a}{b}{c}_t}{P}{Q}$ a transition $w\TTo_{r_k}w'$,
for $1\le k\le n$, is performed if $w\To_t w'$ ($t\in\{ins,del\}$) and for all
$x\in P$, $x$ is a subword of $w$ and for no $y\in Q$, $y$ is a subword of $w$.

The language generated by $\gamma$ is defined by
\begin{equation*}
L(\gamma)=\{w\in T^*\mid x\TTo ^*w\mathrm{\ for\ some\ }
x\in A\}.
\end{equation*}

For $i,j>1$ we denote by $SC_{i,j}INS_{n}^{m,m'}DEL_{p}^{q,q'}$  the families
of languages generated by semi-conditional insertion-deletion systems of degree
$(i,j)$  having insertion and deletion rules of size\linebreak
$(n,m,m';p,q,q')$.

We also define \emph{random context} insertion-deletion systems as
semi-conditional insertion-deletion systems where $i=j=1$. They are denoted with
$RC\,INS_{n}^{m,m'}DEL_{p}^{q,q'}$.

\section{Semi-conditional insertion-deletion systems}

In this section we show that semi-conditional insertion-deletion systems of
degree $(2,2)$ are computationally complete even if rules inserting or deleting
a single symbol are used. We start by showing that that this result is optimal
with respect to the size of insertion and deletion rules.

\begin{thm}\label{thm:sc10}
$SC_{2,2}INS_1^{0,0}DEL_0^{0,0}\subsetneq RE$.
\end{thm}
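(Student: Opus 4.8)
The plan is to establish the strict inclusion by proving something stronger than mere membership in $RE$: every language generated by a system of this type is in fact \emph{recursive}. Since it is classical that the family $REC$ of recursive languages satisfies $REC\subsetneq RE$ (for instance, a suitable encoding of the halting problem is recursively enumerable but not recursive), any language in $RE\setminus REC$ then witnesses the strictness, and the decidability argument simultaneously yields the inclusion in $RE$ for free.

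The crucial structural observation is that systems of size $(1,0,0;0,0,0)$ have \emph{no deletion rules} at all: a deletion rule $(u,\alpha,v)$ requires $\alpha\ne\lambda$, so $|\alpha|\ge 1$, and hence $p=0$ forces $D=\emptyset$. Consequently every rule of such a $\gamma$ is a single-symbol context-free insertion $\fins{a}$ guarded by permitting and forbidding conditions over subwords of length at most two, and each application of $\TTo$ increases the length of the current string by exactly one. Thus the length along any derivation is strictly monotone.

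First I would fix a target word $w\in T^*$ of length $n$ and note that any derivation $x\TTo^* w$ with $x\in A$ must satisfy $|x|\le n$ and pass only through configurations of length at most $n$. Since $V$ is finite, there are only finitely many strings over $V$ of length at most $n$, so the set of configurations reachable from the (finitely many) axioms while never exceeding length $n$ can be computed by a bounded breadth-first search: starting from $\{x\in A:\ |x|\le n\}$, repeatedly apply all applicable guarded insertion rules, discard any result of length exceeding $n$, and stop when no new configuration appears. This terminates because the search space is finite, and $w\in L(\gamma)$ holds if and only if $w$ lies in the computed reachable set. Hence membership in $L(\gamma)$ is decidable and $L(\gamma)$ is recursive.

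Combining the two steps gives $SC_{2,2}INS_1^{0,0}DEL_0^{0,0}\subseteq REC\subsetneq RE$, which is the claim. There is no genuine obstacle here: the semi-conditional guards, although able to inspect subwords of length two, cannot circumvent the length monotonicity imposed by the absence of deletion, and it is precisely this monotonicity that collapses the generative power to the recursive level (indeed the same argument applies to any degree $(i,j)$ and to insertions of any bounded length). The only point requiring a little care is to check that restricting the search to configurations of length $\le n$ discards no derivation of $w$, which is immediate from length monotonicity.
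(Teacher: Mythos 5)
Your proof is correct, but it takes a genuinely different route from the paper's. Both arguments pivot on the same structural fact --- size $(1,0,0;0,0,0)$ forces $D=\emptyset$ since deletion rules require $\alpha\neq\lambda$, so every step is a single-symbol insertion and derivations are strictly length-increasing --- but they exploit it differently. The paper converts the system into a grammar: it encodes unguarded single-symbol insertion by context-free productions $S\to SaS$ (plus productions generating the axioms), eliminates the $\lambda$-productions, and then invokes the classical fact that semi-conditional grammars without erasing rules generate only context-sensitive languages, yielding $SC_{2,2}INS_1^{0,0}DEL_0^{0,0}\subseteq CS\subsetneq RE$. You instead argue directly at the level of derivations: monotonicity bounds every configuration occurring in a derivation of $w$ by $|w|$, so an exhaustive search through the finitely many strings of length at most $|w|$ decides membership, placing every generated language in the class of recursive languages, which is properly contained in $RE$. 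The paper's route buys a nominally sharper classification (inclusion in $CS$ rather than mere decidability), although your argument could be upgraded to the same conclusion, since a nondeterministic linear-bounded automaton can guess a derivation within space $|w|$ and verify the length-two guards on the fly. Your route buys elementarity and robustness: it needs no grammar simulation and no $\lambda$-elimination (the paper leaves implicit how the guard conditions survive the encoding, in which the nonterminal $S$ is interleaved between the real symbols, so that a permitting word of length two is no longer literally a subword of the sentential form); it also handles the corner case $\lambda\in A$, which the paper's unexplained assumption $\lambda\notin L(\gamma)$ quietly excludes; and, as you observe, it generalizes verbatim to any degree $(i,j)$ and any bounded insertion length, whereas the paper's reduction is tailored to the specific shape of the rules.
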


\begin{proof}
Let $\gamma=(T,T,A,R)$, $\gamma\in INS_1^{0,0}DEL_0^{0,0}$ be an insertion-deletion
system. Consider the following context-free grammar $G=(\{S\},T,S,P)$,
$P=\{S\to Sa_1S\dots{}Sa_n\mid a_1\dots{}a_n\in A\}\cup \{S\to SaS\mid a\in
T\}\cup \{S\to\lambda\}$. Let $G'$ be the grammar obtained from $G$ by the
elimination of $\lambda$-productions. Since $\lambda\not\in L(\gamma)$, $G'$
does not contain any erasing rules. This combined with the fact that the family
of semi-contextual grammars without $\lambda$-productions is included in $CS$
yields the result.
\end{proof}

\begin{thm}\label{thm:sc11}
$SC_{2,2}INS^{0,0}_1DEL^{0,0}_1=RE$.
\end{thm}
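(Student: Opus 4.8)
The inclusion $SC_{2,2}INS^{0,0}_1DEL^{0,0}_1\subseteq RE$ is immediate, since every such system can be simulated by a Turing machine. For the converse it suffices to simulate an arbitrary type-0 grammar $G=(N,T,S,P)$ given in the special Geffert normal form introduced above, whose rules are of the forms $X\to bY$ (with $X,Y\in N'$, $b\in N''$), $X\to Yb$ (with $X,Y\in N'$, $b\in T\cup N''$), $S'\to\lambda$, together with the two erasing rules $AB\to\lambda$ and $CD\to\lambda$. I build a semi-conditional insertion-deletion system $\gamma$ of degree $(2,2)$ over an alphabet $V\supseteq N'\cup N''\cup T$, enlarged by a small family of fresh intermediate markers (one family per rule of $P$), whose insertion and deletion rules are all context-free and insert or delete a single symbol, hence of size $(1,0,0;1,0,0)$; its single axiom encodes the start symbol $S$. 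Throughout the simulation of the first stage the string keeps exactly one symbol of $N'$, which serves as the anchor used by the permitting conditions.

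Simulating a right-linear rule $X\to bY$: since a context-free insertion may place a symbol anywhere, I first insert $Y$ by a rule $\RCrule{\fins{Y}}{\{X\}}{\dots}$ and then certify, by a permitting condition of length two such as $\{XY\}$, that the freshly inserted $Y$ landed immediately to the right of the (unique) $X$. A second guarded insertion places $b$ between $X$ and $Y$, certified by the length-two subwords $\{Xb\}$ and $\{bY\}$, after which $X$ is removed by $\RCrule{\fdel{X}}{\dots}{\dots}$, leaving $bY$ in place of $X$. Forbidding conditions block a second firing of the same pipeline and forbid the deletion of $X$ before both adjacency checks have succeeded. The left-linear rules $X\to Yb$ are handled symmetrically. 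Because during the first stage there is exactly one symbol of $N'$, each length-two permitting condition pins down the intended neighbourhood (an occurrence of the anchor followed by $Y$ exists precisely when $Y$ was inserted correctly).

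Simulating the erasing rules relies on the defining structural property of the special Geffert normal form: every symbol of $\{A,C\}$ lies to the left of every symbol of $\{B,D\}$, so the subword $AB$ (respectively $CD$) can occur at most once, exactly at the middle boundary. I use $\{AB\}$ as a permitting condition to detect that boundary, introduce a unique intermediate marker there (certified again by two length-two subwords), and then remove the matching pair by two guarded deletions whose guards force them to be the two boundary symbols, all other rules being forbidden during the procedure. The stage transition $S'\to\lambda$ is simulated by a single guarded deletion $\RCrule{\fdel{S'}}{\dots}{\dots}$, after which the erasing phase may begin.

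The main obstacle, and the point requiring care, is that a context-free insertion or deletion acts on an \emph{arbitrary} occurrence, whereas a permitting or forbidding condition tests only the \emph{global} presence or absence of a subword; hence one cannot directly force an operation to occur at a prescribed position. My way around this is to make every illegal placement harmless rather than impossible: the intermediate markers are non-terminal symbols that can be erased only from a correct configuration, so a misplaced insertion or a deletion of the wrong occurrence either strands such a marker forever or unbalances the $A$/$B$ (respectively $C$/$D$) counts, and in either case the string can never reach $T^*$. Completeness is then straightforward, since the intended sequence of guarded operations is always nondeterministically available and reproduces exactly the derivations of $G$; soundness follows from the invariant that every terminal string produced by $\gamma$ arises from a legal run and therefore encodes a terminal derivation of $G$. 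Establishing this invariant --- checking that no interleaving of the guarded pipelines and no misplaced single-symbol operation can emit a spurious terminal word --- is the technical core of the proof.
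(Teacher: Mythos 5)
There is a genuine gap, and it sits exactly where you park it at the end (``the technical core''): your invariant that every illegal placement is harmless is false for the linear-rule pipelines. Consider your simulation of $X\to bY$ (or, symmetrically, $X\to Yb$ with $b\in T$). The insertion of $b$ is context-free, and its guard (the permitting word $XY$) is a \emph{global} presence test, unaffected by where $b$ lands and unchanged by the insertion itself; so this rule can fire once, misplacing $b$ anywhere in the string, and then fire a second time placing $b$ correctly, after which $\{Xb,bY\}$ holds, $X$ is deleted, and the pipeline completes normally --- leaving one stray copy of $b$ behind. When $b$ is a terminal, this stray symbol strands no marker, unbalances no $A$-versus-$B$ count, and is indistinguishable from a legitimate occurrence by any permitting or forbidding word of length at most $2$ over the raw alphabet $N\cup T$; the derivation then proceeds to a terminal word outside $L(G)$. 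A second instance of the same disease: the check $\{Xb,bY\}$ certifies adjacency only because $X$ and $Y$ occur uniquely, but $b$ need not --- if the symbol already adjacent to the anchor happens to be $b$, the deletion of $X$ can fire with the $b$-insertion step skipped, silently consuming a pre-existing $b$; excluding this requires the type-disjointness of the Geffert form (rules $X\to bY$ have $b\in\{A,C\}$ while the right part is over $T\cup\{B,D\}$), which you never invoke.

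The missing idea --- which is the actual content of the paper's proof --- is a mechanism that makes well-formedness of the \emph{entire} string expressible by forbidden subwords of length $2$. The paper encodes every symbol $a$ as a pair $\hat{a}\bar{a}$, brackets the string with terminators $B$ and $E$, and includes the normalization condition $\Qn$ in the forbidding set of essentially every rule, so that the strings passing all these tests are exactly those of the form $B(\hat{x}\bar{y})^+E$. Under this encoding, \emph{any} misplaced or duplicated single-symbol insertion breaks the hat/bar alternation, renders every rule of the system inapplicable, and freezes the computation at a string containing non-terminals; this is the global ``strand forever'' guarantee you assert but cannot obtain over the unencoded alphabet, since the set of malformed strings over $N\cup T$ is not characterized by any finite set of short forbidden subwords. (The paper also differs in inessential ways: it simulates random context grammars rather than the special Geffert normal form, uses fresh per-rule markers in all pipelines rather than only in the erasing ones, and gets the converse inclusion from $INS_1^{0,0}DEL_1^{0,0}\subseteq CF$ rather than from the Church--Turing thesis; but the encoding-plus-normalization device is the piece without which your construction, as designed, generates spurious words.)
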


\begin{proof}

We will start by proving the inclusion $SC_{2,2}INS^{0,0}_1DEL^{0,0}_1\supseteq
RE$. To achieve this goal, we will show the inclusion
$SC_{2,2}INS^{0,0}_1DEL^{0,0}_1\supseteq \lambda RC_{ac}$, where $\lambda
RC_{ac}$ is the family of random context grammars.

Consider an arbitrary random context grammar $G=(V,T,S,R)$. The set $R$
contains context-free rules with permitting and forbidding contexts.

We can assume that for any random context rule $r:\RCrule{A\ra u}{P}{Q}$ of
$G$, either $|u|=2$, or $|u|=0$. Indeed, any rule $r:\RCrule{A\ra u_1u_2\ldots
u_n}{P}{Q}$ can be replaced by a set of following random context rules:
\[
\begin{array}{l}
\{\RCrule{A\ra u_1 W^{(1)}_r}{P}{Q\cup
  Q_W}\}\cup\{\RCrule{W^{(i)}_r\ra u_{i+1}
  W^{(i+1)}_r}{\emptyset}{\emptyset}|1\leq i\leq n -
1\}\cup\\ \cup\{\RCrule{W^{(n)}_r\ra \lambda}{\emptyset}{\emptyset}\}.
\end{array}
\]
The set $Q_W$ above is defined as follows:
\[
Q_W = \{W^{(j)}_r\mid \exists r:\RCrule{B\ra\beta}{P'}{Q'}\in R, 1\leq j\leq n_r\},
\]
where $n_r$ is the number of symbols in the right-hand side of the rule with
label $r$.

It can be easily seen that the rules above will simulate the context-free rule
$A\to u_1\dots u_n$. The context checking is performed  during the application
of the rule $A\ra u_1 W_r^{(1)}$. The presence of $Q_W$ in the forbidding set
ensures that once started, the sequence of related rules shall be terminated.



Before proceeding to the simulation we will do some preliminary considerations.
First of all, the central problem with $SC_{2,2}INS^{0,0}_1DEL^{0,0}_1$ systems
is that there is no direct way to check the context of a specific symbol in the
string.  To address this problem, we encode every symbol $a\in V$ with a pair
of symbols $\hat{a}\bar{a}$ and place special terminators $B$ and $E$ at the
beginning and the end of the string correspondingly.  Every rule will have a
special forbidding context which will check whether the string has this form.
We will refer to this forbidding context as to the {\em normalization
condition}.

Now, to operate at a specific locus in the string, we insert some service
symbol and use the permitting/forbidding contexts of the subsequent rules to
check whether it is located in the appropriate position.  We then insert and
delete symbols as we need, and the normalization condition included in every
rule will assure that the operations we are performing are only happening in
some neighborhood of the service symbols: whenever the proper organization of
the string is broken, no rules will be further applicable, thus blocking the
computation at a string which contains non-terminal symbols.

Having done the overview of our approach, we are ready to present the solution
itself.

Let $c:V\to (\hat{V}\cup\bar{V})^2$ be a coding defined as
$c(a)=\hat{a}\bar{a}$, $a\in V$.

The normalization condition is the following forbidding condition added to the
forbidding contexts of the majority of rules:
\[
\Qn=\{\hat{x}\hat{y}, \bar{x}\bar{y}\mid x, y\in V\}\cup
\{B\bar{x}, \hat{x}E\mid x\in V\}\cup
\{u B, E u\mid u\in \hat{V}\cup\bar{V}\}.
\]

It can be seen that if $V_1=\hat{V}\cup\bar{V}\cup\{B,E\}$ then
$V_1^*\setminus \Qn=\{B(\hat{x}\bar{y})^+E\mid x,y\in V\}$. Indeed, the first
group of restrictions requires that the string only contains an alternating
sequence of hatted and barred symbols, the third group requires the string to
begin by $B$ and to end by $E$, while the second group states that the first
symbol after $B$ has a hat and the symbol before $E$ has a bar. A string of the
form above is said to be in a normalized form.

We construct the following semi-conditional insertion-deletion system
$\Gamma=(V_\Gamma,T_\Gamma,A_\Gamma,R_\Gamma)$.

The terminal alphabet $T_\Gamma = \{\hat{a}\mid a\in V\}$ contains the hatted
versions of original terminals. The alphabet $V_\Gamma$ includes contains the
service symbols, the hatted and barred versions of every symbol in the alphabet
of $G$, and the new terminal symbols:
\[
\begin{array}{rcl}
Q_\#&=&\{\#_p\mid \exists p:\RCrule{A\ra\lambda}{P}{Q}\in R\},\\
Q_\$&=&\{\dollari{i}_q\mid \exists q:\RCrule{A\ra\alpha}{P}{Q}\in R, \alpha\neq\lambda, 1\leq i\leq 5\},\\
V_\Gamma&=&Q_\#\cup Q_\$\cup \{\hat{a}, \bar{a}\mid a\in V\}\cup T_\Gamma.
\end{array}
\]
The only axiom of the insertion-deletion system is
$A_\Gamma=\{B\hat{S}\bar{S}E\}$.

Note that we are defining the new terminal alphabet $T_\Gamma$ instead of
re-using the old $T$ only for the sake of keeping the description simple and
that this change does not affect the result itself.

The set of rules $R_\Gamma$ is constructed in the following way:
\begin{itemize}
\item for every rule $p:\RCrule{X\ra\lambda}{P}{Q}\in R$ we add to
    $R_\Gamma$ the rules
  \begin{align*}
    &p.1:\RCrule{\fins{\#_p}}{\{\hat{X}\bar{X}\}\cup P}{Q_\#\cup Q_\$\cup Q\cup \Qn}\\
    &p.2:\RCrule{\fdel{\hat{X}}}{\{\#_p\hat{X}\}}{\Qn}\\
    &p.3:\RCrule{\fdel{\bar{X}}}{\{\#_p\bar{X}\}}{\{\hat{X}\#_p\}\cup \Qn}\\
    &p.4:\RCrule{\fdel{\#_p}}{\emptyset}{\Qn},
  \end{align*}

\item for every rule $q:\RCrule{X\ra Y Z}{P}{Q}\in R$, $X, Y, Z\in V$ we
    add to $R_\Gamma$ the rules
  \begin{align*}
    q.1:&\RCrule{\fins{\dollari{1}_q}}{\{\hat{X}\bar{X}\}\cup P}{Q_\#\cup Q_\$\cup Q \cup \Qn},\\
    q.2:&\RCrule{\fins{\dollari{2}_q}}{\{\dollari{1}_q \hat{X}\}}{\{\dollari{2}_q\}\cup \Qn},\\
    q.3:&\RCrule{\fdel{\hat{X}}}{\{\dollari{1}_q \hat{X}, \bar{X}\dollari{2}_q\}}{\Qn},\\
    q.4:&\RCrule{\fins{\dollari{3}_q}}{\{\dollari{1}_q \bar{X}\}}{\{\dollari{2}_q \bar{X},\dollari{3}_q,\hat{X} \dollari{1}_q\}\cup \Qn},\\
    q.5:&\RCrule{\fdel{\dollari{1}_q}}{\{\dollari{3}_q \dollari{1}_q\}}{\Qn},\\
    q.6:&\RCrule{\fdel{\bar{X}}}{\{\dollari{3}_q \bar{X}\}}{\{\dollari{1}_q\}\cup \Qn},\\
    q.7:&\RCrule{\fins{\dollari{4}_q}}{\{\dollari{3}_q\dollari{2}_q\}}{\{\dollari{1}_q,\dollari{4}_q\}\cup \Qn},\\
    q.8:&\RCrule{\fdel{\dollari{2}_q}}{\{\dollari{4}_q\dollari{3}_q\}}{\Qn},\\
    q.9:&\RCrule{\fins{\dollari{5}_q}}{\{\dollari{4}_q\}}{\{\dollari{2}_q,\dollari{5}_q,\hat{X}\dollari{4}_q\}\cup \Qn},\\
    q.10:&\RCrule{\fins{\hat{Y}}}{\{\dollari{4}_q\dollari{3}_q, \dollari{3}_q\dollari{5}_q\}}{\{\hat{Y}\dollari{4}_q\}\cup \Qn},\\
    q.11:&\RCrule{\fins{\bar{Y}}}{\{\dollari{4}_q\hat{Y}, \hat{Y}\dollari{3}_q, \dollari{3}_q\dollari{5}_q\}}{\{\dollari{5}_q\bar{Y}\}\cup \Qn},\\
    q.12:&\RCrule{\fins{\hat{Z}}}{\{\dollari{4}_q\hat{Y}, \bar{Y}\dollari{3}_q, \dollari{3}_q\dollari{5}_q\}}{\{\hat{Z}\dollari{4}_q\}\cup \Qn},\\
    q.13:&\RCrule{\fins{\bar{Z}}}{\{\dollari{4}_q\hat{Y}, \bar{Y}\dollari{3}_q, \dollari{3}_q\hat{Z}, \hat{Z}\dollari{5}_q\}}{\{\dollari{5}_q\bar{Z}\}\cup \Qn},\\
    q.14:&\RCrule{\fdel{\dollari{3}_q}}{\{\dollari{4}_q\hat{Y}, \bar{Y}\dollari{3}_q, \dollari{3}_q\hat{Z}, \bar{Z}\dollari{5}_q\}}{\Qn},\\
    q.15:&\RCrule{\fdel{\dollari{4}_q}}{\{\dollari{4}_q\hat{Y},\bar{Z}\dollari{5}_q\}}{\{\dollari{3}_q\}\cup \Qn},\\
    q.16:&\RCrule{\fdel{\dollari{5}_q}}{\emptyset}{\{\dollari{3}_q, \dollari{4}_q\}\cup \Qn},
  \end{align*}

\item we also add the rules
  \begin{align*}
    \{&\RCrule{\fdel{B}}{\emptyset}{V_\Gamma\backslash\left(\{B,E\}\cup\{\hat{a},\bar{a}\mid a\in T\}\right)},\\
    &\RCrule{\fdel{E}}{\emptyset}{\{B\}}\}\\
    \cup\{&\RCrule{\fdel{\bar{a}}}{\emptyset}{\{B, E\}}\mid a\in V\backslash T\}.
  \end{align*}
\end{itemize}

We will start our explanations with analyzing the third and simplest group of
rules. These rules permit to consecutively erase symbols $B$, $E$ and
$\bar{x}$, where $x\in T$. The first forbidding condition ensures that the group is
applicable only for strings of form $w=B(\hat{x}\bar{y})^*E$, $x,y\in T$. So
the result of the application of this group of rules on $w$ is a projection of
$w$ over $\hat{V}$.

We will now turn to simulating the erasing random context rule
$p:\RCrule{X\ra\lambda}{P}{Q}$.  The correct simulation sequence is as follows:
\begin{gather*}
  w_1 \hat{X}\bar{X} w_2\derivesby{p.1} w_1 \#_p \hat{X}\bar{X}
  w_2\derivesby{p.2} w_1 \#_p \bar{X} w_2\derivesby{p.3} w_1\#_p w_2
  \derivesby{p.4} w_1 w_2,
\end{gather*}
where $w_1, w_2\in V_\Gamma^*$.  Note that if $w_1 \hat{X}\bar{X} w_2$
satisfies the normalization condition, then $w_1w_2$ also satisfies it.

It turns out that almost any other derivation sequence will result in an
unproductive string with non-terminals.  Indeed, the permitting context of rule
$p.2$ will make it inapplicable if $\#_p$ has not been properly positioned by
$p.1$.  Further, if $p.2$ erases the wrong $\hat{X}$ (i.e., not the $\hat{X}$
located immediately to the right of $\#_p$), then the string will not satisfy
the normalization condition and the computation will halt at a string with
non-terminals.  The contexts of the rule $p.3$ assure that $\bar{X}$ is erased
only after the correct instance of $\hat{X}$ has been erased by $p.2$.  Note
that the forbidding word $\hat{X}\#_p$ prevents $p.3$ from being applicable if
$p.1$ inserts $\#_p$ between $\hat{X}$ and $\bar{X}$.  Finally, if $p.4$ erases
$\#_p$ when only $\hat{X}$ has been erased, the string will no longer satisfy
the normalization condition; on the other hand, if $\#_p$ is erased immediately
after it has been inserted, no effect is produced at all.

Now observe that, after the applications of $p.1$, $p.2$, and $p.3$ two valid
scenarios are possible.  If $w_2\neq \hat{X}\bar{X} w'_2$, $w'_2\in
V_\Gamma^*$, then the only applicable rule is $p.4$, because the $\#_p$ blocks
applications of any other rules.  If, however, $w_2 = \hat{X}\bar{X} w'_2$,
then the rule $p.2$ is applicable again, together with $p.4$.  Thus, instead of
erasing $\#_p$ by $p.4$, the system may erase one more pair $\hat{X}\bar{X}$.
This is not a problem, though, because this would correspond to the simulation
of another application of the rule $p$.

Note that it is not necessary to check for the permitting and forbidding
contexts of the original rule $p$ ($P$ and $Q$ correspondingly) before the
application of the rule $p.2$, even if $p.2$ may start as many simulations of
the erasing rule $p$ as there are consequent instances of $\hat{X}\bar{X}$ at
the locus marked by $\#_p$.  The reason is that $p.2$ is only applicable after
an application of $p.1$ (which adds a $\#_p$), and $p.1$ includes both $P$ and
$Q$ in its contexts already.

Quite as expected, the simulation of the generic context-free rule $q:(X\ra
YZ,P,Q)$ is the most complicated.  As usual, we will start with showing the
correct simulation sequence:
\begin{multline*}
w_1 \hat{X}\bar{X} w_2 \derivesby{q.1} w_1 \dollari{1}_q\hat{X}\bar{X} w_2
\derivesby{q.2} w_1 \dollari{1}_q\hat{X}\bar{X} \dollari{2}_q w_2
\derivesby{q.3} w_1 \dollari{1}_q\bar{X} \dollari{2}_q w_2 \\ \derivesby{q.4}
w_1 \dollari{3}_q\dollari{1}_q\bar{X} \dollari{2}_q w_2 \derivesby{q.5} w_1
\dollari{3}_q\bar{X} \dollari{2}_q w_2 \derivesby{q.6} w_1 \dollari{3}_q
\dollari{2}_q w_2 \\\derivesby{q.7} w_1 \dollari{4}_q \dollari{3}_q
\dollari{2}_q w_2 \derivesby{q.8} w_1 \dollari{4}_q \dollari{3}_q w_2
\derivesby{q.9} w_1 \dollari{4}_q \dollari{3}_q \dollari{5}_q w_2
\\\derivesby{q.10} w_1 \dollari{4}_q \hat{Y} \dollari{3}_q \dollari{5}_q w_2
\derivesby{q.11} w_1 \dollari{4}_q \hat{Y}\bar{Y} \dollari{3}_q \dollari{5}_q
w_2 \derivesby{q.12} w_1 \dollari{4}_q \hat{Y}\bar{Y} \dollari{3}_q \hat{Z}
\dollari{5}_q w_2 \\ \derivesby{q.13} w_1 \dollari{4}_q \hat{Y}\bar{Y}
\dollari{3}_q \hat{Z}\bar{Z} \dollari{5}_q w_2 \derivesby{q.14} w_1
\dollari{4}_q \hat{Y}\bar{Y} \hat{Z}\bar{Z} \dollari{5}_q w_2 \derivesby{q.15}
w_1 \hat{Y}\bar{Y} \hat{Z}\bar{Z} \dollari{5}_q w_2  \derivesby{q.16} w_1
\hat{Y}\bar{Y} \hat{Z}\bar{Z} w_2.
\end{multline*}
Here, again, $w_1, w_2\in V_\Gamma^*$.  We remark that if $w_1 \hat{X}\bar{X}
w_2$ satisfies the normalization condition, all the strings shown in this
derivation sequence satisfy the normalization condition as well.

We claim that any derivations aside from the one shown above result in
unproductive branches.  In the next paragraphs we will briefly expose the
rationale behind this statement.

It should be clear by now that the principal way to control the loci of
insertions and deletions is to carefully construct the permitting and
forbidding contexts of the rules which are to be applied next. Consider the
contexts of $q.2$, which compel the rule $q.1$ to insert $\dollari{1}_q$
exactly before an instance of $\hat{X}$.  Similarly, the application of $q.2$
will produce a valid string only $\dollari{2}_q$ is inserted after an instance
of $\bar{X}$.  Note that at the moment we have not assured yet that
$\dollari{1}_q$ and $\dollari{2}_q$ are located before and after the same pair
$\hat{X}\bar{X}$.

Once the symbols $\dollari{1}_q$ and $\dollari{2}_q$ have been inserted into
the string before and after some pairs $\hat{X}\bar{X}$, the rule $q.3$ will
erase an instance of $\hat{X}$; the normalization condition will assure that
the erased symbol was located exactly after $\dollari{1}_q$.

The contexts of the rule $q.4$ guarantee that the proper instance of $\hat{X}$
has already been erased (the permitting condition $\dollari{1}_q\bar{X}$) and
will block the rule in the cases when $\dollari{1}_q$ is inserted between
$\hat{X}$ and $\bar{X}$ and when $\dollari{2}_q$ is inserted after
$\dollari{1}_q$; in the latter case $q.3$ would be blocked.

The rule $q.5$ assures the proper localization of $\dollari{3}_q$, i.e., before
$\dollari{1}_q$.  The effect of rules $q.4$ and $q.5$ is thus to substitute
$\dollari{1}_q$ with $\dollari{3}_q$.

The rule $q.6$ finalizes the erasure of $\hat{X}\bar{X}$.  $\bar{X}$ cannot be
erased before $\dollari{1}_q$ is erased, which also guarantees in the long run
that $\dollari{3}_q$ has been properly positioned.

In the good case we would now expect to have $\dollari{3}_q \dollari{2}_q$;
however, we have not yet assured that the symbol $\dollari{2}_q$ was inserted
after the same pair $\hat{X}\bar{X}$ as the pair before which $\dollari{3}_q$
was inserted.  This is the job of the rule $q.7$: the symbol $\dollari{4}_q$ is
only inserted when there actually is a pair $\dollari{3}_q\dollari{2}_q$ in the
string. If this is not the case, $\dollari{4}_q$ will not be inserted; it is
easy to verify that the subsequent rules will be rendered inapplicable in this
case, thus blocking the computation at a string with non-terminals.

The rule $q.8$ assures proper positioning of $\dollari{4}_q$.  Again, if
$\dollari{2}_q$ is not erased here, all further rule applications will be
blocked.

The rule $q.9$ inserts yet another service symbol $\dollari{5}_q$. The contexts
of this rule assure that it is only applied at the proper time.  The contexts
of $q.10$ assure the proper positioning of $\dollari{5}_q$.  An application of
the rule $q.10$ inserts an instance of $\hat{Y}$ and, by the normalization
condition, this will not block any computations only if $\hat{Y}$ is inserted
to the left or to the right of a service symbol of the $\$$ family.  The
contexts of the rules $q.10$ and $q.11$ assure that $\hat{Y}$ is placed exactly
between $\dollari{4}_q$ and $\dollari{3}_q$.  Similarly, the contexts of $q.11$
and $q.12$ guarantee the proper positioning of $\bar{Y}$. And again, the
situation is exactly the same for the rules $q.12$ and $q.13$, which are
guaranteed to place the corresponding symbols at the proper sites.

A very important remark is that, once we have more than one non-special symbol
between two special symbols (i.e, the symbols of the $\$$ family), the
normalization condition becomes relevant for the substring between the special
symbols.  This limits the number of times the insertion rules $q.10$ to $q.13$
can be applied.

The contexts of the rule $q.14$ provide the final verification of the
configuration of the string around the symbols $\dollari{4}_q$,
$\dollari{3}_q$, and $\dollari{5}_q$.  Once $\dollari{3}_q$ is erased, $q.15$
erases $\dollari{4}_q$.  Note how the contexts of $q.15$ only allow this rule
to be applied after the $\dollari{3}_q$ has been erased.  Finally, $q.16$
finalizes the clean-up of the string.

Now in order to finish the proof of the theorem we show the inclusion
$SC_{2,2}INS^{0,0}_1DEL^{0,0}_1\subseteq RE$. Since
$INS^{0,0}_1DEL^{0,0}_1\subseteq CF$ (see~\cite{SV2-2}) we obtain the inclusion
$SC_{2,2}INS^{0,0}_1DEL^{0,0}_1\subseteq SC_{2,2}\subseteq RE$.
\end{proof}

\section{Random Context Insertion-Deletion Systems}

In this section we consider random context insertion-deletion systems. We show
that such systems having rules of size $(2,0,0;1,1,0)$ are computationally
complete, while systems of size $(1,1,0;p,1,1)$, $p>0$, are not.

\subsection{Computational Completeness}

We start by showing the computational completeness of the family
$RC\,INS^{0,0}_{2}DEL^{1,0}_{1}$.

\begin{thm}\label{thm:rc200110}
$RC\,INS^{0,0}_{2}DEL^{1,0}_{1}=RE$.
\end{thm}

\begin{proof}

We show only the inclusion $RC\,INS^{0,0}_{2}DEL^{1,0}_{1}\supseteq RE$. For
the converse inclusion we invoke the Church-Turing thesis. We will simulate an
arbitrary grammar in the special Geffert normal form. Let $G=(V,T,S,R)$ be such
a grammar. We will simulate $G$ with the following random context
insertion-deletion system $\Gamma=(V_\Gamma,T,\{S\},R_\Gamma)$.

The alphabet $V_\Gamma$ is constructed in the following way: $V_\Gamma=V\cup
N_\Gamma.$
The set $N_\Gamma$ contains the non-terminal service symbols we add to
$\Gamma$.  These symbols are the hatted and barred version of $A$,
$B$, $C$, and $D$, as well as the additional symbols of $\$_r$,
$\#_q$, and $f_r$ families, added on per-rule basis, as shown in the
following paragraphs.

The set of rules $R_\Gamma$ is constructed in the following way:
\begin{itemize}
\item for every rule $(p:X\ra cY)\in R$, $c\in T$ we add the following
  to $R_\Gamma$:
  \[
  \begin{array}{rrll}
    \left\{\right.&p.1:&\RCrule{\fins{cY}}{\{X\}}{(N'\cup N_\Gamma)\backslash\{X\})},&\\
    &p.2:&\RCrule{\lcdel{Y}{X}}{\emptyset}{\emptyset}&\left.\right\},
  \end{array}
  \]
\item for every rule $(p:X\ra Z Y)\in R$, $Z\in N$ we
  consider the rule $p':(X\ra \hat{Z}\bar{Z} Y)$; this rule can be
  rewritten as $\{X\ra \hat{Z} X', X'\ra \bar{Z} Y\}$; we then apply
  the same reasoning as in the previous paragraph,

\item for every rule $(q:X\ra Yc)\in R$, $c\in T$ we add the following
  to $R_\Gamma$:
  \[
  \begin{array}{rrll}
    \left\{\right.&q.1:&\RCrule{\fins{\#_q\#'_q}}{\{X\}}{(N'\cup N_\Gamma)\backslash\{X\}},&\\
    &q.2:&\RCrule{\lcdel{\#'_q}{X}}{\emptyset}{\emptyset},&\\
    &q.3:&\RCrule{\fins{Yc}}{\left\{\#_q\right\}}{\{X,Y\}},&\\
    &q.4:&\RCrule{\lcdel{c}{\#'_q}}{\emptyset}{\emptyset},&\\
    &q.5:&\RCrule{\fdel{\#_q}}{\emptyset}{\{\#'_q\}}&\left.\right\},\\
  \end{array}
  \]
\item for every rule $(q:X\ra Y Z)\in R$, $Z\in N$ we
  consider the rule $q':(X\ra Y \hat{Z}\bar{Z})$; this rule can be
  rewritten as $\{X\ra X' \bar{Z}, X'\ra Y \hat{Z}\}$; we then apply
  the same reasoning as in the previous paragraph,

\item for every rule $(r:UV\ra \lambda)\in R$, $(U,
  V)\in\{(A,B),(C,D)\}$ we add the following to $R_\Gamma$:
  \[
  \begin{array}{rrll}
    \left\{\right.&r.1:&\RCrule{\fins{\$_r \$'_r}}{\emptyset}{N_\Gamma\backslash\{\hat{U},\bar{U},\hat{V},\bar{V}},&\\
    &r.2:&\RCrule{\lcdel{\hat{U}}{\$_r}}{\emptyset}{\emptyset},&\\
    &r.3:&\RCrule{\fins{\$''_r}}{\{\$'_r\}}{\{\$_r,\$''_r,\$'''_r\}},&\\
    &r.4:&\RCrule{\lcdel{\$''_r}{\$'_r}}{\emptyset}{\emptyset},&\\
    &r.5:&\RCrule{\lcdel{\$''_r}{\bar{U}}}{\emptyset}{\{\$'_r\}},&\\
    &r.6:&\RCrule{\fins{\$'''_r}}{\{\$''_r\}}{\{\$'_r,\$'''_r\}},&\\
    &r.7:&\RCrule{\lcdel{\$'''_r}{\$''_r}}{\emptyset}{\emptyset},&\\
    &r.8:&\RCrule{\lcdel{\$'''_r}{\hat{V}}}{\emptyset}{\{\$''_r\}},&\\
    &r.9:&\RCrule{\lcdel{\$'''_r}{\hat{A}}}{\emptyset}{\{\$''_r\}},&\\
    &r.10:&\RCrule{\lcdel{\$'''_r}{\hat{C}}}{\emptyset}{\{\$''_r\}},&\\

    &r.11:&\RCrule{\fins{f_r f'_r}}{\{\$'''_r\}}{\{\$''_r, f'_r\}},&\\
    &r.12:&\RCrule{\lcdel{\bar{V}}{f_r}}{\emptyset}{\emptyset},&\\
    &r.13:&\RCrule{\fins{\dollarfour_r}}{\{f'_r\}}{\{f_r, \dollarfour_r\}},&\\
    &r.14:&\RCrule{\lcdel{\dollarfour_r}{\$'''_r}}{\emptyset}{\emptyset},&\\
    &r.15:&\RCrule{\lcdel{\dollarfour_r}{\bar{V}}}{\emptyset}{\{\$'''_r\}},&\\
    &r.16:&\RCrule{\lcdel{\dollarfour_r}{f'_r}}{\emptyset}{\{\$'''_r\}},&\\
    &r.17:&\RCrule{\fdel{\dollarfour_r}}{\emptyset}{\{f'_r\}},&\\

    &r.18:&\RCrule{\lcdel{\emptyset}{\hat{U}}}{\emptyset}{V_\Gamma\backslash\left(T\cup\{\hat{A},\hat{C}\}\right)}&\left.\right\}.\\
  \end{array}
  \]
\end{itemize}

Consider the application of the rule $p:X\ra cY$ to $w_1 X w_2$,
$w_1,w_2\in (T\cup N)^*$.  Remember that, for any string $w$
generated by a grammar in the special Geffert normal form, the following
holds: $|w|_{Z}\leq 1, \forall Z\in N'$.

The correct sequence of $\Gamma$ events simulating the rule $p$ is as
follows:
\[
w_1 X w_2\derivesby{p.1} w_1 cYX w_2\derivesby{p.2} w_1 cY w_2.
\]

Note that $p.1$ will only be applicable when there is one copy
of $X$ in the string and where there is no other symbol from $N'\cup N_\Gamma$.
This, among others, prohibits starting the simulation of rule $p$ if
another rule is being simulated.

Now suppose that $cY$ is inserted in a different place in the string
(not before $X$).  In this case $p.2$ will not be applicable in the
new configuration and both $X$ and $Y$ will remain in the string.
Since a simulation of a rule cannot start when there is more than one
symbol from $N'$ in the string, the derivation halts at this invalid
string.

The simulation of the rule $q:X\ra Yc$ takes more steps.  The reason
is that we use deletion rules to assure proper location of insertion
sites, and deletion rules have left contexts, while in the string $Yc$
the symbol $c$ is to the right of $Y$.

The correct sequence of $\Gamma$ events is the following.
\[
\begin{array}{l}
w_1 X w_2 \derivesby{q.1} w_1 \#_q \#'_q X w_2 \derivesby{q.2} w_1
\#_q \#'_q w_2 \derivesby{q.3} w_1 \#_q Yc \#'_q w_2 \\\derivesby{q.4}
w_1 \#_q Yc w_2 \derivesby{q.5} w_1 Yc w_2.
\end{array}
\]

We will now focus on analysing the situations when the insertions do
not happen at desired loci.  Suppose that $q.1$ does not insert
$\#_q\#'_q$ before $X$.  In this case $q.2$ is not applied.  The rule
$q.3$ cannot be applied, because there is an instance of $X$ in the
string.  The rule $q.4$ cannot be applied, because $\#'_q$ is located
immediately after $\#_q$.  The rule $q.5$ is inapplicable, because
there is an instance of $\#'_q$ in the string.  Therefore, if
$\#_q\#'_q$ is not inserted right before $X$, all three symbols from
$N'$ (namely, $X$, $\#_q$, and $\#'_q$) remain in the string.  This
prevents any further simulation of rules and the derivation halts at
an invalid string.

The other insertion rule is $q.3$.  Suppose that $Yb$ is inserted in a
place other than between $\#_q$ and $\#'_q$.  In that case the rule
$q.4$ will never be applied and neither will $q.5$ (because the symbol
$q.4$ will not be deleted).  The string will contain three symbols
from $N'\cup N_\Gamma$ ($Y$, $\#_q$, and $\#'_q$), which makes any further
simulation impossible.  Again, the derivation halts at an invalid
string.

The simulation of the rule $UV\ra \lambda$ is the longest.  The
two problems with this rule are that we need to remove the symbols $U$
and $V$ only if they are located in a certain order and that we need
to remove exactly one instance of $U$ and one instance of $V$.

The correct simulation sequence is as follows:
\[
\begin{array}{l}
u_1 \hat{U} \bar{U} \bar{V} \hat{V} u_2\derivesby{r.1}
u_1 \hat{U} \$_r \$'_r \bar{U} \hat{V} \bar{V} u_2\derivesby{r.2}
u_1 \hat{U} \$'_r \bar{U} \hat{V} \bar{V} u_2\derivesby{r.3}
u_1 \hat{U} \$''_r \$'_r \bar{U} \hat{V} \bar{V} u_2\\\derivesby{r.4}
u_1 \hat{U} \$''_r \bar{U} \hat{V} \bar{V} u_2\derivesby{r.5}
u_1 \hat{U} \$''_r \hat{V} \bar{V} u_2\derivesby{r.6}
u_1 \hat{U} \$'''_r \$''_r \hat{V} \bar{V} u_2\derivesby{r.7}
u_1 \hat{U} \$'''_r \hat{V} \bar{V} u_2\\\derivesby{r.8}
u_1 \hat{U} \$'''_r \bar{V} u_2\derivesby{r.11}
u_1 \hat{U} \$'''_r \bar{V} f_r f'_r u_2\derivesby{r.12}
u_1 \hat{U} \$'''_r \bar{V} f'_r u_2\\\derivesby{r.13}
u_1 \hat{U} \dollarfour_r \$'''_r \bar{V} f'_r u_2\derivesby{r.14}
u_1 \hat{U} \dollarfour_r \bar{V} f'_r u_2 \derivesby{r.15}
u_1 \hat{U} \dollarfour_r f'_r u_2 \\\derivesby{r.16}
u_1 \hat{U} \dollarfour_r u_2 \derivesby{r.17}
u_1 \hat{U} u_2,
\end{array}
\]
where $u_1, u_2\in(T\cup\{\bar{U},\hat{U}\mid U\in\{A,B,C,D\}\})^*$.
Note that $r.18$ will eventually erase all instances of $\hat{U}$ when
the string will only contain symbols from $T\cup \{\hat{A},\hat{C}\}$.

The simulation of the erasing rule starts with fixing a $\$'_r$ at the
site between an instance of $\hat{U}$ and $\bar{U}$.  An instance of
$\$''_r$ is then inserted before $\$'_r$ and erases both $\$'_r$ and
the $\bar{U}$ which is located next to it.  Next $\$'''_r$ is
inserted, which erases both $\$''_r$ and the $\hat{V}$ which should
follow immediately in a valid string.  Then we fix a site immediately
after a $\bar{V}$ by inserting $f_r f'_r$ and having $\bar{V}$ erase
$f_r$.  This allows to eventually insert $\dollarfour_r$ whose mission
is to erase $\$'''_r$ and a $\bar{V}$.  We demand that $\dollarfour_r$
erase $f'_r$ in order to assure that, on the one hand, $\dollarfour_r$
has erased $\bar{V}$ and, on the other hand, that $f'_r$ is right
after the currently processed sequence $\hat{U}\bar{U}\hat{V}\bar{V}$.

Note that, when erasing a sequence $\hat{U}\bar{U}\hat{V}\bar{V}$, we
do not erase the $\hat{U}$.  Instead, we rely on rule $r.17$ to do
that when the string only contains terminals and, possibly, hatted
versions of $A$ and $C$.  It is not necessary to erase $\hat{U}$
because, by erasing the pairs $\bar{U}\hat{V}$ and discarding
$\hat{U}$ and $\bar{V}$, we already do the necessary checking of
whether the non-terminals $\{A,B,C,D\}$ are organised in proper pairs.
The only situation when the left-over instance of $\hat{A}$ or
$\hat{C}$ can interfere with other simulations of the rule
$r:UV\ra\lambda$ is when there are nested pairs, like
$\hat{A}\bar{A}\hat{C}\bar{C}\hat{D}\bar{D}\hat{B}\bar{B}$.  In this
case, the inner pair will be erased first, leaving a $\hat{C}$.  When
the $\$'''_r$ corresponding to the rule $AB\ra\lambda$ will need to
erase $\$''_r$ and $\hat{B}$ from the string, it can also remove the
left-over $\hat{C}$ by rule $r.10$.  Thus, the garbage symbols
$\hat{A}$ and $\hat{C}$ cannot disrupt the simulation of other erasing
rules.

On the other hand, observe that $\$'''_r$ is programmed to erase
either $\$''_r \hat{V}$, or $\$''_r \hat{A} \hat{V}$, or $\$''_r
\hat{C} \hat{V}$.  Two hatted symbols one after another cannot be
produced after the simulation of the context-free rules of the
grammar, because those rules only produce pairs of hatted and barred
symbols. This means that $\$'''_r$ will successfully handle the
garbage symbols resulting from the formerly existent nested pair, but
will not be able to produce unwanted effects in other parts of the
string.

We will explicitly remark that, according to the prohibiting set of
$r.1$, the simulation of a rule $UV\ra\lambda$ can only start once the
simulation of all other types of rules have completed.

We will now try to see whether it is possible for the described
simulation process to produce wrong results.  Suppose that $r.1$ does
not insert the pair $\$_r\$'_r$ after $\hat{U}$.  In this case $r.2$
will not be applicable.  Furthermore, the rule $r.3$ will not be
applicable either, because the symbol $\$_r$ is still in the string.
Obviously, the rules $r.4$, $r.5$, and $r.6$ cannot be considered
because there is no symbol $\$''_r$.  Since the symbol $\$'''_r$ is
not inserted either, the rules $r.7$ to $r.11$ are inapplicable.
Because the pair $f_r f'_r$ is not inserted, the rules $r.12$ and
$r.13$ cannot be applied.  Finally, because no rule inserts
$\dollarfour_r$, the rules $r.14$ to $r.17$ are not applicable either.
Nevertheless, the symbols $\$_r$ and $\$'_r$ are still in the string,
which makes any other simulations impossible, which renders the whole
branch of computation invalid.

Similar considerations apply to the insertion rule $r.3$.  In the case
$\$''_r$ does not arrive at the proper site, $\$'_r$ is not removed,
which blocks further applications of rules.  In an analogous way, the
insertion of $\$'''_r$ at an improper site renders the simulation
invalid.  And for the same reasons, the insertion of $\dollarfour_r$
anywhere but immediately before $\$'''_r$ blocks the computation.
Note that, before allowing a the symbol $\$^{(i)}_r$ to erase the
corresponding hatted or barred version of $U$ or $V$, we demand that
it erase the symbol $\$^{(i-1)}_r$.  In this way we guarantee that the
symbols of the $\$_r$ family are always inserted inside the same
sequence of $\hat{U}\bar{U}\hat{V}\bar{V}$ as the one that was fixed
by the initial insertion of $\$_r\$'_r$.

The only insertion rule left is $r.11$.  The rule $r.12$ and the
forbidding contexts of the following rules guarantee that the symbol
$f'_r$ is located after an instance of $\bar{V}$.  There is however no
guarantee that this will be the instance of $\bar{V}$ which belongs to
the processed sequence $\hat{U}\bar{U}\hat{V}\bar{V}$.  Suppose that
$f'_r$ is indeed inserted after a different instance of $\bar{V}$.  In
this case the rule $r.16$ will be inapplicable, because there can be
only one instance of $f'_r$ in the string, and this instance is
positioned improperly.  This will not allow $\dollarfour_r$ to
eventually be erased by the rule $r.17$ and will therefore block any
further computations.


\end{proof}

\subsection{Computational Incompleteness Results}
In this section we will show that  random context insertion-deletion systems of
size $(1,1,0;p,1,1)$ are not capable of generating all recursively enumerable
languages, hence they are computationally incomplete.

The proof below is based on the observation that it is impossible to control
the number of applications of an insertion rule $r:\RCrule{\lcins{x}{y}}{P}{Q}$
when $y$ is already present in the string.

Obviously, in the case $P=Q=\emptyset$, if the rule can be applied once, it can
also be applied any number of times. We also remark that including $x$ in
either $P$ or $Q$ gives no advantage; including $x$ in the permitting context
is redundant, while including it in the forbidding context would just make the
rule never applicable. Hence, there are two remaining cases: $y\not \in Q$, and
$y\in Q$.  When $y\not\in Q$, the situation is exactly the same as with
$P=Q=\emptyset$ from the point of view of controlling the number of
applications.  On the other hand, when $y\in Q$, the rule $r$ will only be
applicable when the string contains no $y$.  So, when symbol $y$ is already
present in the string, rule $r$ can be applied any number of times.

\begin{thm}\label{thm:rc110p11}
$RE\backslash RC\,INS^{1,0}_1DEL^{1,1}_p\neq\emptyset$.
\end{thm}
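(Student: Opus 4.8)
The plan is to exhibit a concrete recursively enumerable language that cannot be generated by any system of size $(1,1,0;p,1,1)$, exploiting the uncontrollability of the insertion rule $r:\RCrule{\lcins{x}{y}}{P}{Q}$ described in the paragraph preceding the statement. The key observation is already isolated: whenever the symbol $y$ is present in the current string, the insertion rule $\lcins{x}{y}$ can be fired an arbitrary number of times, since the random-context conditions $P,Q\subseteq V$ check only for the presence or absence of individual letters and cannot count occurrences. So the insertion rules of such a system are, in a precise sense, pumpable: once a right-context letter $y$ appears, the number of inserted copies of $x$ before it becomes unbounded.

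First I would pick a target language over a small alphabet designed so that the number of occurrences of some symbol is rigidly constrained — a natural candidate is a language like $\{ba^nba^n \mid n\ge 1\}$ or $\{(ba^n)^k \mid \dots\}$, i.e. one where two blocks must have exactly equal length, so that pumping a single block produces a string outside the language while staying in $T^*$. The point is that such a language is clearly in $RE$ (indeed context-sensitive), so it witnesses $RE\setminus RC\,INS^{1,0}_1DEL^{1,1}_p\neq\emptyset$ provided I can show it is not generated by any $(1,1,0;p,1,1)$ system. I would then argue by contradiction: assume $\Gamma$ of size $(1,1,0;p,1,1)$ generates this language, take a sufficiently long terminal word $w\in L(\Gamma)$, and examine a derivation $x\To^* w$ with $x$ an axiom.

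Next I would analyse the structure of derivations. Since only insertion rules can increase length and they insert a single symbol with a left context of length one (rules of the form $\lcins{x}{y}$, possibly with empty left context $\fins{x}$), and deletion rules remove at most $p$ symbols with contexts of total length two, I would track how occurrences of the critical symbol are produced. The central lemma to establish is a pumping-type statement: if $w$ is long enough, some insertion rule $\lcins{x}{y}$ must be applied during the derivation at a point where $y$ already occurs, and by the uncontrollability observation this same insertion can be iterated to produce derivations of longer strings $w'$ that still terminate in $T^*$ but violate the length-balance constraint of $L(\Gamma)$. The delicate bookkeeping is ensuring that the extra inserted symbols survive to the terminal string (or that the forbidding/permitting contexts, being letter-based, cannot distinguish the pumped configuration from the original), so that $w'\in L(\Gamma)\setminus L$.

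The hard part will be ruling out the interaction with deletion rules, which do carry a two-symbol context and could in principle ``clean up'' spuriously inserted symbols; I would need to show that because forbidding and permitting sets only test for the existence of letters and not their multiplicity or relative position beyond a single adjacent symbol, any deletion sequence applicable after one extra insertion is also applicable after many, so the surplus cannot be fully removed. Making this robust — probably via an invariant on the multiset of symbols and a careful case analysis on whether $y$ is terminal or nonterminal and whether it can itself be deleted — is where the real work lies, and it is the step I expect to be the main obstacle.
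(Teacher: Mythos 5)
Your overall plan coincides with the paper's: pick a concrete witness language and pump an insertion rule $\lcins{x}{y}$ that fires while $y$ is already present. But the proposal stops precisely at the step where an idea is required, and you explicitly defer it as the main obstacle (the possibility that deletion rules clean up the surplus, handled via a multiset invariant and a case analysis on whether $y$ is terminal or nonterminal). The paper dissolves this obstacle with two moves absent from your sketch. First, it assumes, without loss of generality and with a citation to known normalization results for insertion-deletion systems, that $R$ contains no rules erasing \emph{terminal} symbols. Second, it arranges for the pumped symbol to be a terminal: the witness is the regular language $(ab)^+$, and in a long enough $w=w_1abw_2$ some occurrence of $b$ must have been produced by a rule $p:\RCrule{\lcins{x}{b}}{P}{Q}$ at a moment when other $b$'s were already in the string (axioms are finite and terminals are never erased, so one may take, say, the chronologically last inserted $b$). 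The case $x=\lambda$ is dispatched separately, since a context-free insertion of $b$ is position-blind and can directly produce a string outside $(ab)^+$. With $x\neq\lambda$, your uncontrollability observation yields $S\derivesby{*}w_1'xb^+w_2'$, and since the surplus consists of terminals it can \emph{never} be deleted: the clean-up scenario you worry about cannot occur at all, so no invariant and no terminal/nonterminal case analysis is needed. This is the genuine gap: without the no-terminal-erasure normalization you must handle systems that do delete terminals, and a pumped nonterminal really can be undone, so the analysis you postpone is not bookkeeping but the crux, and your sketch offers no mechanism to close it.

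What remains in the paper is a short replay argument that you gesture at but should make explicit: every later step of the original derivation stays applicable on $w_1'xb^+w_2'$ because the set of letters present is unchanged (only extra $b$'s were added, and $b\notin Q$ for any rule used after $b$ appeared), insertion contexts of size $(1,0)$ and deletion contexts of size $(1,1)$ cannot distinguish a single $b$ from a block $b^+$, and deleted factors contain no terminals. Hence $S\derivesby{*}w_1ab^+w_2\in L(\Gamma)\setminus(ab)^+$, the desired contradiction. Your candidate $\{ba^nba^n\mid n\ge 1\}$ would also work once the normalization is in hand, but it buys nothing extra: $(ab)^+$ shows the separation is witnessed already by a regular language, and a single pumped $b$ immediately violates membership, keeping the whole analysis local.
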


\begin{proof}
We show that the language $L=(ab)^+$ cannot be generated with such systems. We
shall do the proof by contradiction. Suppose that there is a random context
insertion deletion system $\Gamma = (V, T, S, R)$, $\Gamma\in
RC\,INS^{1,0}_1DEL^{1,1}_p$, such that $L(\gamma)=L$. Without restricting
generality, we will not allow $R$ to contain rules erasing terminal symbols,
see~\cite{AKRV10b} or~\cite{VerlanH} for more details.

Consider a derivation $S\derivesby{+} w$, where $w\in (ab)^+$.  We will focus
on a pair of symbols $ab$: $w = w_1 ab w_2$, where $w_1, w_2\in (ab)^*$.

Since $R$ does not contain rules erasing terminals, the instance of
$b$ which belongs to the current pair was either inserted at this site
by an insertion rule or was a part of the axiom $S$.  We will not
examine this latter case, and will instead suppose that the word $w$
is long enough.  The rule which inserted $b$ is of the form
$p:\RCrule{\lcins{x}{b}}{P}{Q}$.  It is very important to remark at
this point that the permitting and forbidding contexts of the rule $p$
cannot influence the position at which the insertion takes place.  We
are therefore not interested in the case $x=\lambda$, because such a
rule would be allowed to insert the $b$ at any site on the string,
and, in particular, we could obtain the string $w'=w_1 ba w_2\not \in
(ab)^+$.

So, we have the following derivation:
\[
S\derivesby{*} w'_1 x w'_2 \derivesby{p} w'_1 x b w'_2 \derivesby{*} w_1 ab w_2 = w.
\]
where $w'_1, w'_2\in V^*$. Since $w\in (ab)^+$ we can suppose that $|w_1'|_b>0$
(as well as $|w_2'|_b>0$). In this case, according to our remark above, rule
$p$ can be applied an arbitrary number of times yielding the following
derivation:
$$
 S\derivesby{*} w'_1 x w'_2 \derivesby{p} w'_1 x b w'_2\derivesby{p^+} w'_1 x b^+ w'_2\derivesby{*} w_1 ab^+
w_2.
$$

We observe that the last part of the derivation is possible because a deletion
rule of size $(p,1,1)$ cannot distinguish if there is one or more $b$ present
in the left or right context.

We therefore conclude that, if $S\derivesby{*} w_1 ab w_2$, then
$S\derivesby{*} w_1 ab^+ w_2$ and $L(\Gamma)\neq (ab)^+$, which contradicts our
initial supposition.
\end{proof}

\section{Conclusion}
In this paper we introduced the mechanism of conditional application of rules
for insertion-deletion systems based on semi-conditional and random context
conditions. This mechanism permitted to achieve an increase in the
computational power, especially for the case of semi-conditional
insertion-deletion systems, which are showed to characterize the family of
recursively enumerable languages with rather simple rules: context-free
insertion or deletion of one symbol. This shows that the control mechanism is
in some sense more powerful than the graph control, because in the latter case
only the recursively enumerable sets of numbers can be generated~\cite{AKRV11}.
The form of the rules and of the control mechanism permit to consider it as a
particular case of Networks of Evolutionary Processors~\cite{NEP}, so the
obtained result can be transcribed in that area as well.

In the case of random-context conditions we obtained results exhibiting an
interesting asymmetry between the computational power of insertion and deletion
rules: systems having rules of size $(2,0,0;1,1,0)$ are computationally
complete, while those having the size $(1,1,0;2,0,0)$ (and more generally of
size $(1,1,0;p,1,1)$) are not. This result is surprising because until now all
the obtained results in the area of insertion-deletion systems with or without
additional controls were analogous if parameters of insertion and deletion
rules were interchanged. The characterization of these classes gives an
interesting topic for a further research.

We remark that all the constructions were additionally tested using a
self-developed dedicated simulator that is available ``on request'' from the
corresponding author. These tests permitted to verify that for a single rule
application, all possible evolutions except the one corresponding to the
correct one introduce in the string (groups of) non-terminal symbols that
cannot be removed anymore.


\begin{thebibliography}{99}

\bibitem{AKRV10b} A.~Alhazov, A.~Krassovitskiy, Y.~Rogozhin, and S.~Verlan.
\newblock Small size insertion and deletion systems.
\newblock In C.~Martin-Vide, ed., {\em Scientific Applications of Language
  Methods}, volume~2 of {\em Mathematics, Computing, Language, and Life:
  Frontiers in Mathematical Linguistics and Language Theory}, chap.~9,
  459--524. World Sci., 2010.

\bibitem{AKRV11} A. Alhazov, A. Krassovitskiy, Y. Rogozhin, S. Verlan,
    P Systems with Minimal Insertion and Deletion, Theoretical Computer
    Science, \textbf{412}(1-2), 136--144, (2011).

\bibitem{Beene} R.~Benne.
\newblock {\em RNA Editing: The Alteration of Protein Coding Sequences of RNA}.
\newblock Ellis Horwood, Chichester, West Sussex, 1993.

\bibitem{BBD07} F.~Biegler, M.~J. Burrell, and M.~Daley.
\newblock Regulated {RNA} rewriting: Modelling {RNA} editing with guided
  insertion.
\newblock {\em Theor. Comput. Sci.}, 387(2):103 -- 112, 2007.

\bibitem{NEP} E. Csuhaj-Varj\'u, A. Salomaa, Networks of Parallel
    Language Processors, \emph{Lecture Notes in Computer Science},  1218, 299-318, 1997.

\bibitem{Daley} M.~Daley, L.~Kari, G.~Gloor, and R.~Siromoney.
\newblock Circular contextual insertions/deletions with applications to
  biomolecular computation.
\newblock In {\em SPIRE/CRIWG}, 47--54, 1999.

\bibitem{FKRV10} R.~Freund, M.~Kogler, Y.~Rogozhin, and S.~Verlan.
\newblock Graph-controlled insertion-deletion systems.
\newblock In I.~McQuillan and G.~Pighizzini, eds, {\em Proc. of 12$^{th}$
   Workshop on Descriptional Complexity of Formal Systems}, vol.~31 of
  {\em EPTCS}, 88--98, 2010.


\bibitem{Galiuk} B.~Galiukschov.
\newblock Semicontextual grammars.
\newblock {\em Matem. Logica i Matem. Lingvistika}, 38--50, 1981.
\newblock \hbox{ }Tallin University, (in Russian).


\bibitem{Geffert91} V.~Geffert.
\newblock Normal forms for phrase-structure grammars.
\newblock {\em ITA}, 25:473--498, 1991.

\bibitem{Haussler82} D.~Haussler.
\newblock {\em Insertion and Iterated Insertion as Operations on Formal
  Languages}.
\newblock PhD thesis, Univ. of Colorado at Boulder, 1982.

\bibitem{Haussler83} D.~Haussler.
\newblock Insertion languages.
\newblock {\em Information Sciences}, 31(1):77--89, 1983.

\bibitem{Kari} L.~Kari.
\newblock {\em On Insertion and Deletion in Formal Languages}.
\newblock PhD thesis, University of Turku, 1991.

\bibitem{cross} L.~Kari, G.~P\u{a}un, G.~Thierrin, and S.~Yu.
\newblock At the crossroads of {DNA} computing and formal languages:
  Characterizing {RE} using insertion-deletion systems.
\newblock In {\em Proc. of 3rd {DIMACS} Workshop on {DNA} Based Computing},
   318--333. Philadelphia, 1997.

\bibitem{Kleene56} S.~C. Kleene.
\newblock Representation of events in nerve nets and finite automata.
\newblock In C.~Shannon and J.~McCarthy, editors, {\em Automata Studies},
  3--41. Princeton University Press, Princeton, NJ, 1956.

\bibitem{KRV08} A.~Krassovitskiy, Y.~Rogozhin, and S.~Verlan.
\newblock Further results on insertion-deletion systems with one-sided
  contexts.
\newblock In C.~Mart\'{\i}n-Vide et al.,eds, {\em
  Language and Automata Theory and Applications, Second International
  Conference, LATA 2008. Revised Papers},
  vol. 5196 of {\em LNCS},  333--344.
  Springer, 2008.

\bibitem{KRV08c} A.~Krassovitskiy, Y.~Rogozhin, and S.~Verlan.
\newblock Computational power of {P} systems with small size insertion and
  deletion rules.
\newblock In T.~Neary, D.~Woods, A.~K. Seda, and N.~Murphy, eds, {\em
  Proc. International Workshop on The Complexity of Simple Programs,
  Cork, Ireland, 6-7th December 2008}, vol.~1 of {\em EPTCS},  108--117,
  2009.

\bibitem{KRV10} A.~Krassovitskiy, Y.~Rogozhin, and S.~Verlan.
\newblock Computational power of insertion-deletion ({P}) systems with rules of
  size two.
\newblock {\em Natural Computing},  \textbf{10}(2), 835--852, 2011.

\bibitem{Marcus} S.~Marcus.
\newblock Contextual grammars.
\newblock {\em Rev. Roum. Math. Pures Appl.}, 14:1525--1534, 1969.

\bibitem{cfinsdel} M.~Margenstern, G.~P\u{a}un, Y.~Rogozhin, and S.~Verlan.
\newblock Context-free insertion-deletion systems.
\newblock {\em Theor. Comput. Sci.}, 330(2):339--348, 2005.

\bibitem{MRV07} A.~Matveevici, Y.~Rogozhin, and S.~Verlan.
\newblock Insertion-deletion systems with one-sided contexts.
\newblock In J.~O. Durand-Lose and M.~Margenstern, eds, {\em Machines,
  Computations, and Universality, 5th International Conference, MCU 2007,
  Orl{\'e}ans, France, 
  Proceedings}, vol. 4664 of {\em
  LNCS},  205--217.  2007.

\bibitem{PV10} I.~Petre and S.~Verlan.
\newblock Matrix insertion-deletion systems.
\newblock {\em arXiv}, CoRR abs/1012.5248, 2010.


\bibitem{paun97}G. P\u{a}un. A variant of random context grammars:
    Semi-conditional
    grammars. \emph{Theoretical Computer Science}, \textbf{41}, 1-17, 1985.

\bibitem{Kluwer} G.~P\u{a}un.
\newblock {\em Marcus Contextual Grammars}.
\newblock Kluwer Academic Publishers, Norwell, MA, USA, 1997.

\bibitem{membr} G.~P\u{a}un.
\newblock {\em Membrane Computing. An Introduction}.
\newblock Springer-Verlag, 2002.

\bibitem{dna} G.~P\u{a}un, G.~Rozenberg, and A.~Salomaa.
\newblock {\em {DNA} Computing: New Computing Paradigms}.
\newblock Springer, 1998.

\bibitem{handbook} G.~Rozenberg and A.~Salomaa, editors.
\newblock {\em Handbook of Formal Languages}.
\newblock Springer-Verlag, Berlin, 1997.

\bibitem{Smith} W.~D. Smith.
\newblock {DNA} computers in vitro and in vivo.
\newblock In R.~Lipton and E.~Baum, editors, {\em Proceedings of DIMACS
  Workshop on {DNA} Based Computers}, DIMACS Series in Discrete Math. and
  Theoretical Computer Science,  121--185. Amer. Math. Society, 1996.

\bibitem{TY} A.~Takahara and T.~Yokomori.
\newblock On the computational power of insertion-deletion systems.
\newblock In M.~Hagiya and A.~Ohuchi, eds, {\em Proc. of 8th
  International Workshop on {DNA} Based Computers, Revised Papers}, vol. 2568 of {\em LNCS},  269--280. 2002.


\bibitem{SV2-2} S.~Verlan.
\newblock On minimal context-free insertion-deletion systems.
\newblock {\em Journal of Automata, Languages and Combinatorics},
  12(1-2):317--328, 2007.

\bibitem{VerlanH}S.~Verlan. Study of language-theoretic computational paradigms
    inspired by biology. Habilitation thesis, University of Paris Est, 2010.

\end{thebibliography}
\end{document}